\definecolor{refcolor}{rgb}{0.1,0.2,0.88}
\theoremstyle{definition}
\newtheorem{theorem}{Theorem}[section]
\newtheorem{remark}[theorem]{Remark}
\def\({\left(}
\def\){\right)}
\newcommand{\R}{\mathbb{R}}
\newcommand{\de}{\textnormal{d}}
\newcommand{\vol}{\de_{vol}}
\newcommand{\volspace}{\de_{vol_3}}
\newcommand{\tn}{\textnormal}
\newcommand{\ds}{\displaystyle}
\newcommand{\vs}{\textit{vs.} }
\newcommand{\eg}{\textit{e.g.} }
\newcommand{\citep}[2]{\cite{#1}, p. #2}
\newcommand{\mc}[1]{\mathcal{#1}}
\newcommand{\sref}[1]{\S\ref{#1}}
\newcommand{\image}[3]{
\begin{figure}[ht]
\begin{center}
\includegraphics[width=#2\textwidth]{#1}
\caption{\small{\label{#1}#3}}
\end{center}
\end{figure}
}
\newcommand{\dsfrac}[2]{\ds{\frac{#1}{#2}}}
\newcommand{\flrw}{Friedmann-Lema\^itre-Robertson-Walker}
\newcommand{\FLRW}{FLRW}
\def\hyph{-\penalty0\hskip0pt\relax}
\newcommand{\semiriem}{semi{\hyph}Riemannian}
\newcommand{\semireg}{semi{\hyph}regular}
\newcommand{\quasireg}{quasi{\hyph}regular}
\newcommand{\nondeg}{non{\hyph}degenerate}
\begin{document} 
 
\title[The FLRW Big Bang singularities are well behaved]{The Friedmann-Lema\^itre-Robertson-Walker Big Bang singularities are well behaved}

\author{Ovidiu Cristinel \ Stoica}
\date{\today. Horia Hulubei National Institute for Physics and Nuclear Engineering, Bucharest, Romania. E-mail: cristi.stoica@theory.nipne.ro, holotronix@gmail.com}

\begin{abstract}
We show that the Big Bang singularity of the Friedmann-Lema\^itre-Robertson-Walker model does not raise major problems to General Relativity. We prove a theorem showing that the Einstein equation can be written in a non-singular form, which allows the extension of the spacetime before the Big Bang. The physical interpretation of the fields used is discussed.

These results follow from our research on singular semi-Riemannian geometry and singular General Relativity.
\bigskip
\noindent 
\keywords{Big-Bang,Cosmology,singular semi-Riemannian manifolds,singular semi-Riemannian geometry,degenerate manifolds,semi-regular semi-Riemannian manifolds,semi-regular semi-Riemannian geometry}
\end{abstract}


\maketitle

\setcounter{tocdepth}{1}
\tableofcontents

\section{Introduction}

\subsection{The universe}

According to the {\em cosmological principle}, our expanding universe, although it is so complex, can be considered at very large scale homogeneous and isotropic. This is why we can model the universe, at very large scale, by the solution proposed by A. Friedmann \cite{FRI22de,FRI99en,FRI24}. This exact solution to Einstein's equation, describing a homogeneous, isotropic universe, is in general called the {\flrw} (\FLRW) metric, due to the rediscovery and contributions made by Georges Lema\^itre \cite{LEM27}, H. P. Robertson \cite{ROB35I,ROB35II,ROB35III} and A. G. Walker \cite{WAL37}.

The {\FLRW} model shows that the universe should be, at a given moment of time, either in expansion, or in contraction. From Hubble's observations, we know that the universe is currently expanding. The {\FLRW} model shows that, long time ago, there was a very high concentration of matter, which exploded in what we call the {\em Big Bang}. Was the density of matter at the beginning of the universe so high that the Einstein's equation was singular at that moment? This question received an affirmative answer, under general hypotheses and considering General Relativity to be true, in Hawking's singularity theorem \cite{Haw66i,Haw66ii,Haw67iii} (which is an application of the reasoning of Penrose for the black hole singularities \cite{Pen65}, backwards in time to the past singularity of the Big Bang).

Given that the extreme conditions which were present at the Big Bang are very far from what our experience told us, and from what our theories managed to extrapolate up to this moment, we cannot know precisely what happened then. If because of some known or unknown quantum effect the energy condition from the hypothesis of the singularity theorem was not obeyed, the singularity might have been avoided, although the density was very high. One such possibility is explored in the {\em loop quantum cosmology} \cite{bojowald2003absenceLQC,Boj05,Ash08,ashtekar2011LQC,visinescu2009bianchi,visinescu2012bianchi}, which leads to a Big Bounce discrete model of the universe.

Not only quantum effects, but also classical ones, for example repulsive forces, can avoid the conditions of the singularity theorems, and prevent the occurrence of singularities. An important example comes from non-linear electrodynamics, which allows the construction of a stress-energy tensor which removes the singularities, as it is shown in \cite{corda2010removingBHsingularities} for black holes, and in \cite{corda2011inflation} for cosmological singularities.

We will not explore here the possibility that the Big Bang singularity is prevented to exist by quantum or other kind of effects, because we don't have the complete theory which is supposed to unify General Relativity and Quantum Theory. What we will do in the following is to push the limits of General Relativity to see what happens at the Big Bang singularity, in the context of the {\FLRW} model. We will see that the singularities are not a problem, even if we don't modify General Relativity and we don't assume very repulsive forces to prevent the singularity.

One tends in general to regard the singularities arising in General Relativity as an irremediable problem which forces us to abandon this successful theory \cite{HP70,ASH91,HP96,Ash08}. In fact, contrary to what is widely believed, we will see that the singularities of the {\FLRW} model are easy to understand and are not fatal to General Relativity. In \cite{Sto11a} we presented an approach to extend the {\semiriem} geometry to the case when the metric can become degenerate. In \cite{Sto11b} we applied this theory to the warped products, by this providing means to construct examples of singular {\semiriem} manifolds of this type. We will develop here some ideas suggested in some of the examples presented there, and apply them to the singularities in the {\FLRW} spacetime. We will see that the singularities of the {\FLRW} metric are even simpler than the black hole singularities, which we discussed in \cite{Sto11e,Sto11f,Sto12e}.

\subsection{The {\flrw} spacetime}
\label{s_flrw_intro}

Let's consider the $3$-space at any moment of time as being modeled, up to a scaling factor, by a three-dimensional Riemannian space $(\Sigma,g_\Sigma)$. The time is represented as an interval $I\subseteq \R$, with the natural metric $-\de t^2$. At each moment of time $t\in I$, the space $\Sigma_t$ is obtained by scaling $(\Sigma,g_\Sigma)$ with a scaling factor $a^2(t)$. The scaling factor is therefore given by a function $a: I\to \R$, named the {\em warping function}. The {\FLRW} spacetime is the manifold $I\times\Sigma$ endowed with the metric
\begin{equation}
\label{eq_flrw_metric}
\de s^2 = -\de t^2 + a^2(t)\de\Sigma^2,
\end{equation}
which is the {\em warped product} between the manifolds $(\Sigma,g_\Sigma)$ and $(I,-\de t^2)$, with the warping function $a: I\to \R$.

The typical space $\Sigma$ can be any Riemannian manifold we may need for our cosmological model, but because of the homogeneity and isotropy conditions, it is in general taken to be, at least at large scale, one of the homogeneous spaces $S^3$, $\R^3$, and $H^3$. In this case, the metric on $\Sigma$ is, in spherical coordinates $(r,\theta,\phi)$,
\begin{equation}
\label{eq_flrw_sigma_metric}
\de\Sigma^2 = \dsfrac{\de r^2}{1-k r^2} + r^2\(\de\theta^2 + \sin^2\theta\de\phi^2\),
\end{equation}
where $k=1$ for the $3$-sphere $S^3$, $k=0$ for the Euclidean space $\R^3$, and $k=-1$ for the hyperbolic space $H^3$.

\subsection{The Friedmann equations}

Once we choose the $3$-space $\Sigma$, the only unknown part of the {\FLRW} metric is the function $a(t)$. To determine it, we have to make some assumptions about the matter in the universe. In general it is assumed, for simplicity, that the universe is filled with a fluid with mass density $\rho(t)$ and pressure density $p(t)$. The density and the pressure are taken to depend on $t$ only, because we assume the universe to be homogeneous and isotropic. The stress-energy tensor is
\begin{equation}
\label{eq_friedmann_stress_energy}
T_{ab} = \(\rho+p\)u_a u_b + p g_{ab},
\end{equation}
where $u^a$ is the timelike vector field $\partial_t$, normalized.

From the energy density component of the Einstein equation, one can derive the {\em Friedmann equation}
\begin{equation}
\label{eq_friedmann_density}
\rho = \dsfrac{3}{\kappa}\dsfrac{\dot{a}^2 + k}{a^2},
\end{equation}
where $\kappa:=\dsfrac{8\pi \mc G}{c^4}$ ($\mc G$ and $c$ being the gravitational constant and the speed of light, which we will consider equal to $1$ for now on, by an appropriate choice of measurement units).
From the trace of the Einstein equation, we obtain the {\em acceleration equation}
\begin{equation}
\label{eq_acceleration}
\rho + 3p = -\dsfrac{6}{\kappa}\dsfrac{\ddot{a}}{a}.
\end{equation}
The {\em fluid equation} expresses the conservation of mass-energy:
\begin{equation}
\label{eq_fluid}
\dot{\rho} = -3 \dsfrac{\dot{a}}{a}\(\rho + p\).
\end{equation}

The Friedmann equation \eqref{eq_friedmann_density} shows that we can uniquely determine $\rho$ from $a$. The acceleration equation determines $p$ from both $a$ and $\rho$. Hence, the function $a$ determines uniquely both $\rho$ and $p$.

From the recent observations on supernovae, we know that the expansion is accelerated, corresponding to the existence of a positive cosmological constant $\Lambda$ \cite{RIE98,PER99}. The Friedmann's equations were expressed here without $\Lambda$, but this doesn't reduce the generality, because the equations containing the cosmological constant are equivalent to those without it, by the substitution
\begin{equation}
\label{eq_friedmann_lambda}
\begin{array}{l}
\bigg\{
\begin{array}{lll}
\rho &\to& \rho + \kappa^{-1}\Lambda  \\
p &\to& p - \kappa^{-1}\Lambda.  \\
\end{array}
\\
\end{array}
\end{equation}
Therefore, for simplicity we will continue to ignore $\Lambda$ in the following.

\image{flrw-std}{0.5}{The standard view is that the universe originated from a very dense state, probably a singularity, and expanded, with a short period of very high acceleration (the inflation).}

The current standard view in cosmology is that the universe started with the Big Bang, which is in general assumed to be singular, and then expanded, with a very short period of exponentially accelerated expansion, called {\em inflation} (Fig. \ref{flrw-std}).

\section{The main ideas}

The solution proposed here is simple: to show that the singularities of the {\FLRW} model don't break the evolution equation, we show that the equations can be written in an equivalent form which avoids the infinities in a natural and invariant way. We consider useful to prepare the reader with some simple mathematical observations, which will clarify our proof. These observations can be easily understood, and combined they help us understand the Big Bang singularity in the {\FLRW} spacetime.

\subsection{Distance separation \vs topological separation}


Let's consider, in the space $\R^3$ parametrized by the coordinates $(u,v,w)$, the cylinder defined by the equation $v^2+w^2=1$. The transformation 
\begin{equation}
\label{eq_desing_cone}
\begin{array}{l}
\Bigg\{
\begin{array}{ll}
	x&=u \\
	y&=uv \\
	z&=uw \\
\end{array}
\\
\end{array}
\end{equation}
makes it into a cone in the space parametrized by $(x,y,z)$, defined by
\begin{equation}
\label{eq_cone}
	x^2-y^2-z^2=0.
\end{equation}

\image{cone2cylinder}{0.75}{A cylinder may have the metric of a cone, but from topological viewpoint, it still remains a cylinder. Similarly, it is not necessary to assume that, at the Big Bang singularity, the entire space was a point, but only that the space metric was degenerate.}

The natural metric on the space $(x,y,z)$ induces, by pull-back, a metric on the cylinder $v^2+w^2=1$ from the space $(u,v,w)$. The induced metric on the cylinder is singular: the distance between any pair of points of the circle determined by the equations $u=0$ and $v^2+w^2=1$ is zero. But the points of that circle are distinct.

From the viewpoint of the singularities in General Relativity, the main implication is that just because the distance between two points is $0$, it doesn't mean that the two points coincide. We can see something similar already in Special Relativity: the $4$-distance between two events separated by a lightlike interval is equal to $0$, but those events may be distinct.

In fact, as we have seen in section \sref{s_flrw_intro}, the underlying manifold is $I\times\Sigma$, so there is no way to conclude that the space $\Sigma_t=\{t\}\times\Sigma$ reduces at a point, just because $a(t)=0$. However, we tried to make this more explicit, for pedagogical reasons.

\subsection{Degenerate warped product and singularities}

The mathematics of General Relativity is a branch of differential geometry, called {\em {\semiriem} (or pseudo-Riemannian) geometry} (see \eg \cite{ONe83}). It is a generalization of the Riemannian geometry, to the case when the metric tensor is still {\nondeg}, but its signature is not positive. In this geometric framework are defined notions like contraction, Levi-Civita connection, covariant derivative, Riemann curvature, Ricci tensor, scalar curvature, Einstein tensor. These are the main ingredients of the theory of General Relativity \cite{HE95,ONe83,Wal84}.

The problem is that at singularities these main ingredients can't be defined, or become infinite. The perfection of {\semiriem} geometry is broken there, and by this, it is usually concluded that the same happens with General Relativity.

In \cite{Sto11a} we introduced a way to extend {\semiriem} geometry to the degenerate case. There is a previous approach \cite{Kup87b,Kup96}, which works for metric of constant signature, and relies on objects that are not invariant. Our need was to have a theory valid for variable signature (because the metric changes from being {\nondeg} to being degenerate), and which in addition allows us to define the Riemann, Ricci and scalar curvatures in an invariant way, and something like the covariant derivative for the differential forms and tensor fields which are of use in General Relativity. After developing this theory, introduced in \cite{Sto11a}, we generalized the notion of warped product to the degenerate case, providing by this a way to construct useful examples of singularities of this well behaved kind \cite{Sto11b}.

From the mathematics of degenerate warped products it followed that a warped product like that involved in a {\FLRW} metric (equation \ref{eq_flrw_metric}) has only singularities which are well behaved, and which allow the extension of General Relativity to those points. At these singularities, the Riemann curvature tensor $R_{abcd}$ is not singular, and it is smooth if $a$ is smooth. The Einstein equation can be replaced by a densitized version, which allows the continuation to the singular points and avoids the infinities.

\subsection{What happens if the density becomes infinite?}
\label{s_infinite_density}

In the Friedmann equations \eqref{eq_friedmann_density}, \eqref{eq_acceleration}, and \eqref{eq_fluid}, the variables are $a$, the mass/energy density $\rho$ and the pressure density $p$. When $a\to 0$, $\rho$ appears to tend to infinity, because a finite amount of matter occupies a volume equal to $0$. Similarly, the pressure density $p$ may become infinite. How can we rewrite the equations to avoid the infinities? As it will turn out, not only there is a solution to do this, but the quantities involved are actually the natural ones, rather than $\rho$ and $p$. As present in the equations, both $\rho$ and $p$ are scalar fields. They are the mass and pressure density (see equation \eqref{eq_friedmann_stress_energy}), as seen by an observer moving with $4$-velocity $u=\frac{\partial_t}{|\partial_t|}$, in an orthonormal frame. However, there is no orthonormal frame for $a=0$, since the metric is degenerate at such points. Because of this, it makes no sense to use $\rho$ and $p$, which can't be defined there.

On the other hand, to find the mass from the mass density, we don't integrate the scalar $\rho$, but the differential $3$-form $\rho\volspace$, where
\begin{equation}
\label{eq_dvolspacet}
	\volspace(t,x,y,z) := \sqrt{g_{\Sigma_t}}\de x\wedge\de y\wedge\de z = a^3\sqrt{g_{\Sigma}}\de x\wedge\de y\wedge\de z
\end{equation}
is the volume form of the manifold $(\Sigma_t,g_{\Sigma_t})$.
Since the typical space is the same for all moments of time $t$, $\det_3 g_{\Sigma}$ is constant.

The {\em volume element}, or the {\em volume form} is defined as
\begin{equation}
\label{eq_dvol}
	\vol := \sqrt{-g}\de t\wedge\de x\wedge\de y\wedge\de z = a^3\sqrt{g_{\Sigma}}\de t\wedge\de x\wedge\de y\wedge\de z.
\end{equation}
It follows that
\begin{equation}
\label{eq_dvolspacest}
	\volspace =\operatorname{i}_{\partial_t}\vol.
\end{equation}

The values $\rho$ and $p$ which appear in the Friedmann equations coincide with the components of the corresponding densities only in an orthonormal frame, where the determinant of the metric equals $-1$, and we can omit $\sqrt{-g}$. But when $a\to 0$, an orthonormal frame would become singular, because $\det g\to 0$. A frame in which the metric has the determinant $-1$ will necessarily be singular when $a(t)=0$. In a non-singular frame, $\det g$ has to be variable, as it is in the comoving coordinate system of the {\FLRW} model.

We shall see in Theorem \ref{thm_bb_sing_resolved} that, unlike the (frame-dependent) scalars $\rho$ and $p$, the differential forms $\rho\volspace$, $p\volspace$, $\rho\vol$, and $p\vol$ are smooth, even if $a$ vanishes.

\section{The Big Bang singularity resolution}
\label{s_bb_sing_resolved_generic}

As explained in section \sref{s_infinite_density}, we should account in the mass/energy density and the pressure density for the term $\sqrt{-g}$.

Consequently, we make the following substitution:
\begin{equation}
\label{eq_substitution_densities}
\begin{array}{l}
\bigg\{
\begin{array}{ll}
	\widetilde\rho = \rho \sqrt{-g} = \rho a^3 \sqrt{g_{\Sigma}} \\
	\widetilde p = p \sqrt{-g} = p a^3 \sqrt{g_{\Sigma}} \\
\end{array}
\\
\end{array}
\end{equation}

We have the following result:
\begin{theorem}
\label{thm_bb_sing_resolved}
If $a$ is a smooth function, then the densities $\widetilde\rho$, $\widetilde p$, and the densitized stress-energy tensor $T_{ab}\sqrt{-g}$ are smooth (and therefore nonsingular), even at moments $t_0$ when $a(t_0)=0$.
\end{theorem}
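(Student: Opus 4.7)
The plan is a direct computation in comoving coordinates: using \eqref{eq_sqrt_det_g_flrw}, the densitizing factor $\sqrt{-g} = a^3 \sqrt{g_\Sigma}$ supplies three positive powers of $a$, which I would use to absorb the negative powers of $a$ appearing in the Friedmann equations \eqref{eq_friedmann_density} and \eqref{eq_acceleration}.

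First I would treat $\widetilde\rho$. Multiplying \eqref{eq_friedmann_density} by $a^3\sqrt{g_\Sigma}$ cancels the $a^2$ in its denominator, leaving $\widetilde\rho$ as a smooth expression built from $a$, $\dot a$, and $\sqrt{g_\Sigma}$. For $\widetilde p$, I would solve \eqref{eq_acceleration} for $p$ and multiply by $a^3\sqrt{g_\Sigma}$; the $a^{-1}$ in that equation is then absorbed, yielding a smooth expression in $a$, $\dot a$, $\ddot a$, and $\sqrt{g_\Sigma}$ (with the previously obtained $\widetilde\rho$ contributing the remaining term). Both quantities are therefore smooth on $I\times\Sigma$, including at any $t_0$ with $a(t_0)=0$.

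Then I would turn to the densitized stress-energy tensor. In the comoving frame, the normalized timelike vector $u^a = \delta^a_t$ remains well-defined at $a(t_0)=0$ since $g_{tt}=-1$ is unaffected by the spatial warping, so $u_a$ is smooth. Reading off the components of \eqref{eq_friedmann_stress_energy} with indices lowered, I would find $T_{tt}=\rho$, $T_{ij}=p\,a^2 g_{\Sigma,ij}$, and $T_{ti}=0$. Multiplying through by $\sqrt{-g}=a^3\sqrt{g_\Sigma}$ then gives $T_{tt}\sqrt{-g}=\widetilde\rho$ and $T_{ij}\sqrt{-g}=a^2\,g_{\Sigma,ij}\,\widetilde p$, both smooth by the previous paragraph.

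The statement has no real mathematical obstacle; it is essentially a cancellation-of-powers argument, and the arithmetic of exponents in $\sqrt{-g}$ versus the Friedmann equations works out cleanly. The only point requiring a little care is to verify that the coordinate basis and the normalized timelike field remain well-behaved at the singular locus, which follows from the fact that the FLRW metric degenerates only in its spatial block while the $\de t$ factor is untouched.
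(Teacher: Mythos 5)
Your proposal is correct and follows essentially the same route as the paper: multiply the Friedmann and acceleration equations by $\sqrt{-g}=a^3\sqrt{g_\Sigma}$ so the powers of $a$ cancel, giving $\widetilde\rho = \frac{3}{\kappa}a(\dot a^2+k)\sqrt{g_\Sigma}$ and $\widetilde\rho+3\widetilde p = -\frac{6}{\kappa}a^2\ddot a\sqrt{g_\Sigma}$, then observe that the lowered-index stress-energy tensor densitizes to a manifestly smooth expression. The only cosmetic difference is that you verify smoothness of $T_{ab}\sqrt{-g}$ componentwise, while the paper writes the single covariant formula $T_{ab}\sqrt{-g}=(\widetilde\rho+\widetilde p)u_a u_b+\widetilde p\,g_{ab}$; both amount to the same check.
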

\begin{proof}
The Friedmann equation \eqref{eq_friedmann_density} becomes
\begin{equation}
\label{eq_friedmann_density_tilde}
\widetilde\rho = \dsfrac{3}{\kappa}a\(\dot a^2 + k\) \sqrt{g_{\Sigma}},
\end{equation}
from which it follows that if $a$ is a smooth function, $\widetilde\rho$ is smooth as well.

The acceleration equation \eqref{eq_acceleration} becomes
\begin{equation}
\label{eq_acceleration_tilde}
\widetilde\rho + 3\widetilde p = -\dsfrac{6}{\kappa}a^2\ddot{a} \sqrt{g_{\Sigma}},
\end{equation}
which shows that $\widetilde p$ is smooth too. Hence, for smooth $a$, both $\widetilde \rho$ and $\widetilde p$ are non-singular.


The four-velocity vector field is $u=\dsfrac{\partial}{\partial_t}$, which is a smooth unit timelike vector. The densitized stress-energy tensor becomes therefore
\begin{equation}
\label{eq_friedmann_stress_energy_densitized}
T_{ab}\sqrt{-g} = \(\widetilde\rho+\widetilde p\)u_a u_b + \widetilde p g_{ab},
\end{equation}
which is smooth, because $\widetilde\rho$ and $\widetilde p$ are smooth functions. 
\end{proof}

\begin{remark}
We can write now a smooth densitized version of the Einstein Equation:
\begin{equation}
\label{eq_einstein_idx:densitized}
	G_{ab}\sqrt{-g} + \Lambda g_{ab}\sqrt{-g} = \kappa T_{ab}\sqrt{-g}.
\end{equation}

This equation is obtained from the same Lagrangian as the Einstein Equation, since the Hilbert-Einstein Lagrangian density $R\sqrt{-g} - 2\Lambda\sqrt{-g}$ already contains $\sqrt{-g}$. Hence, we don't have to change General Relativity to obtain it. What we have to do is just to avoid dividing by $\sqrt{-g}$ when it may be zero.
\end{remark}

Since $\vol = \sqrt{-g}\de t\wedge\de x\wedge\de y\wedge\de z$ \eqref{eq_dvol}, we can rewrite the densitized version of the Einstein Equation:
\begin{equation}
\label{eq_einstein_idx:volume}
	G_{ab}\vol + \Lambda g_{ab}\vol = \kappa T_{ab}\vol,
\end{equation}
where
\begin{equation}
\label{eq_friedmann_stress_energy_volume}
T_{ab}\vol = \(\rho\vol+p\vol\)u_a u_b + p\vol g_{ab},
\end{equation}
and all terms are finite and smooth everywhere, including at the singularity.

\image{flrw-def}{1.0}{\textbf{A.} A schematic representation of a generic Big Bang singularity, corresponding to $a(0)=0$. The universe can be continued before the Big Bang without problems. \textbf{B.} A schematic representation of a Big Bang similar to an infinitesimal Big Bounce, corresponding to $a(0)=0$,  $\dot a(0)=0$,  $\ddot a(0)>0$.}

\begin{remark}
If $a(0)=0$, the equation \eqref{eq_friedmann_density_tilde} tells us that $\widetilde\rho(0)=0$. From these and equation \eqref{eq_acceleration_tilde} we see that $\widetilde p(0)=0$ as well. Of course, this doesn't necessarily tell us that $\rho$ or $p$ are zero at $t=0$, they may even be infinite. Figure \ref{flrw-def} A. shows how the universe will look, in general.
Another interesting possibility is that when $a(0)=0$, also $\dot a(0)=0$. In this case we may have $a(t)\geq 0$ around $t=0$, for example if $\ddot a(0)$, and obtain a Big Bang represented schematically in Fig. \ref{flrw-def} B.
This is very similar to a Big Bounce model, except that the singularity still appears.
\end{remark}

\section{Physical and geometric interpretation}

We compare the approach to the {\FLRW} singularity proposed here with the one we proposed in \cite{Sto12a}. An important advantage of the present approach is that the fundamental quantities, which remain smooth at the singularity, have a more natural physical interpretation.

The idea from \cite{Sto12a} is based not on the densitized version of the Einstein equation \eqref{eq_einstein_idx:densitized}, but on a version proposed in \cite{Sto12b}, named the \textit{expanded Einstein equation},
\begin{equation}
\label{eq_einstein_expanded}
	(G\circ g)_{abcd} + \Lambda (g\circ g)_{abcd} = \kappa (T\circ g)_{abcd}.
\end{equation}
In this equation we used the \textit{Kulkarni-Nomizu product} of two symmetric bilinear forms $h$ and $k$,
\begin{equation}
\label{eq_kulkarni_nomizu}
	(h\circ k)_{abcd} := h_{ac}k_{bd} - h_{ad}k_{bc} + h_{bd}k_{ac} - h_{bc}k_{ad}.
\end{equation}
The idea is based on the Ricci decomposition of the Riemann curvature tensor,
\begin{equation}
\label{eq_ricci_decomposition}
	R_{abcd} = S_{abcd} + E_{abcd} + C_{abcd},
\end{equation}
where $S_{abcd} = \frac{1}{24}R(g\circ g)_{abcd}$, $E_{abcd} = \frac{1}{2}(S \circ g)_{abcd}$, $S_{ab} := R_{ab} - \dsfrac{1}{4}Rg_{ab}$, and $C_{abcd}$ is the Weyl tensor.

The expanded Einstein equation \eqref{eq_einstein_expanded} can be rewritten explicitly as
\begin{equation}
\label{eq_einstein_expanded_explicit}
	2 E_{abcd} - 3 S_{abcd} + \Lambda (g\circ g)_{abcd} = \kappa (T\circ g)_{abcd},
\end{equation}

In \cite{Sto12a}, we showed that on a {\FLRW} spacetime, all the terms in equation \eqref{eq_einstein_expanded_explicit}, and consequently the expanded Einstein equation \eqref{eq_einstein_expanded}, are finite and smooth even at the singularity.
This suggests that the Ricci part of the Riemann curvature, although contains the same information as the Ricci tensor away from singularities, is more fundamental. Another advantage of {\quasireg} singularities is that, in dimension four, they satisfy automatically  Penrose's {\em Weyl curvature hypothesis} \cite{Pen79}, as we have shown in \cite{Sto12c}, while this is not known to be true for more general {\semireg} singularities.

The extension of the {\FLRW} solution through the singularity proposed in this article is isometric with the one proposed in \cite{Sto12a}. However, the solution proposed here has more advantages, because it is more general, and it leads to better physical interpretations.
It is expressed in terms of the densities $\rho \sqrt{-g}$ and $p \sqrt{-g}$, or equivalently $\rho \vol$ and $p \vol$, which are finite and smooth everywhere, including at the singularity. Unlike $\rho$ and $p$, which are usually called densities and in fact are scalars, and which are singular, $\rho \sqrt{-g}$ and $p \sqrt{-g}$ have the correct physical meaning of densities. For example, to obtain mass, one integrates the density. But on manifolds, one does not integrate scalars, but differential forms like $\rho \vol$ and $p \vol$. It is the differential form which has both the geometric meaning, and the physical meaning. This is why $\rho \vol$ and $p \vol$ make more sense than $\rho$ and $p$. In addition, they lead to the stress-energy tensor density $T_{ab}\sqrt{-g}$ from \eqref{eq_friedmann_stress_energy_densitized}, which is actually what we obtain from the matter Lagrangian, not the tensor $T_{ab}$. This is because the Lagrangian density is the density $\mc L_{\tn{matter}}\sqrt{-g}$, and not the scalar $\mc L_{\tn{matter}}$, and
\begin{equation}
\label{eq_stress_energy}
	T_{ab}\sqrt{-g} = 2\dsfrac{\delta\(\mc L_{\tn{matter}}\sqrt{-g}\)}{\delta g^{ab}}.
\end{equation}
In addition, the Lagrangian density from which the Einstein equation follows is, in terms of the Hilbert-Einstein Lagrangian density $R\sqrt{-g}$, the matter Lagrangian density $\mc L_{\tn{matter}}\sqrt{-g}$, and the cosmological constant $\Lambda$, is
\begin{equation}
\label{eq_lagrangian}
	\dsfrac{1}{2\kappa}\(R\sqrt{-g} - 2\Lambda\sqrt{-g}\) + \mc L\sqrt{-g}.
\end{equation}
In the solution presented here, although the scalar curvature $R$ is singular at $a(t)\to 0$, all terms from the Lagrangian density are finite and smooth. In the process to obtain the Einstein equation, the densitized one \eqref{eq_einstein_idx:densitized} is obtained. As long as we are not sure that $\sqrt{-g}\neq 0$, it is not allowed to divide by it, and the densitized Einstein equation is the one one should use.

Hence, the approach presented here in terms of densities (which are but the components of differential forms) has some advantages over the one presented in \cite{Sto12a}, for being more general, and having better geometrical and physical meaning.

\section{Perspectives}

We have seen that, when expressed in terms of proper variables, the equations of the {\FLRW} spacetime are not singular. Is this a lucky coincidence, or it reflects something more general?
Here we argue that this is a particular case of a more general situation, and this result is part of a larger series.

In \cite{Sto11a,Sto11b}, it is developed the geometry of manifolds endowed with metrics which are not necessarily non-degenerate everywhere. A special type of singularities, named {\em \semireg}, are shown to have nice properties. They admit covariant derivatives or lower covariant derivatives for the fields that are important, and the Riemann {\em curvature tensor} $R_{abcd}$ is smooth. They satisfy a {\em densitized Einstein equation}.

In the case of stationary {\em black holes}, the singularity $r=0$ is due to a combination of the fact that the coordinates are singular (just like in the case of the event horizon), and the metric is degenerate. In \cite{Sto11e,Sto11f,Sto11g} it is shown how we can remove the coordinate singularity, making $g_{ab}$ finite, and {\em analytic} at $r=0$.
In \cite{Sto12e} it is shown that black hole singularities are compatible with {\em global hyperbolicity}, which is required to restore the {\em conservation of information}.

In \cite{Sto12b}, a class of {\semireg} singularities, having good behavior is identified. They are named named {\em \quasireg}.
	In \cite{Sto12c}, it is shown that {\quasireg} singularities satisfy Penrose's {\em Weyl curvature hypothesis} \cite{Pen79}. A large class of {\em cosmological models} with big-bang that is not necessarily isotropic and homogeneous is identified.
	In \cite{Sto12d,Sto12f} is shown that {\semireg} and {\quasireg} singularities are accompanied by {\em dimensional reduction}, and connections with various approaches to perturbative {\em quantum gravity} are presented.

\subsection*{Acknowledgments}

The author thanks the reviewers for valuable comments and suggestions to improve the clarity and the quality of this paper.


\end{document}